\newcommand\smallO{
  \mathchoice
    {{\scriptstyle\mathcal{O}}}
    {{\scriptstyle\mathcal{O}}}
    {{\scriptscriptstyle\mathcal{O}}}
    {\scalebox{.7}{$\scriptscriptstyle\mathcal{O}$}}
  }
\def\BibTeX{{\rm B\kern-.05em{\sc i\kern-.025em b}\kern-.08em
    T\kern-.1667em\lower.7ex\hbox{E}\kern-.125emX}}
\newtheorem{theorem}{Theorem}
\newtheorem{lemma}{Lemma}
\newtheorem{remark}{Remark}
\newcommand{\msf}{\mathsf}
\newcommand{\lbp}{\left\{}
\newcommand{\rbp}{\right\}}
\newcommand{\lp}{\left(}
\newcommand{\rp}{\right)}
\newcommand{\BEC}{BPEC}
\newcommand{\A}{\mathrm{A}}
\newcommand{\B}{\mathrm{B}}
\newcommand{\T}{\mathrm{T}}
\newcommand{\name}{multi-modal \BEC}
\newcommand{\Name}{Multi-modal \BEC}
\newcommand{\na}{n_{\A}}
\newcommand{\nb}{n_{\B}}
\newcommand{\nt}{n_{\T}}
\newcommand{\da}{\delta_{\A}}
\newcommand{\db}{\delta_{\B}}
\newcommand{\dt}{\delta_{\T}}
\newcommand{\ba}{\beta_{\A}}
\newcommand{\bb}{\beta_{\B}}
\newcommand{\bt}{\beta_{\T}}
\newcommand{\bm}{\beta_{\mathrm{max}}}
\newcommand{\bmin}{\beta_{\mathrm{min}}}
\newcommand{\si}{\mathrm{SI}}
\begin{document}




\title{Capacity Results for Non-Ergodic Multi-Modal Broadcast Channels with Controllable Statistics}


\author{
Alireza~Vahid,~\IEEEmembership{Senior~Member,~IEEE},~and~Shih-Chun~Lin,~\IEEEmembership{Senior~Member,~IEEE}
\thanks{Preliminary results of this work will appear in IEEE International Symposium on Information Theory (ISIT) 2024~\cite{Bimodal-ISIT}.}
\thanks{Alireza Vahid is with the Electrical and Microelectronic Engineering Department at Rochester Institute of Technology, Rochester, NY 14623, USA. Email: {\sffamily alireza.vahid@rit.edu}.}  
\thanks{Shih-Chun Lin is with the Department of Electrical Engineering, National Taiwan University, Taipei, Taiwan. Emails: {\sffamily sclin2@ntu.edu.tw}.}
}

\maketitle


\begin{abstract}
Movable antennas and reconfigurable intelligent surfaces enable a new paradigm in which channel statistics can be controlled and altered. Further, the known trajectory and operation protocol of communication satellites results in networks with predictable statistics. The predictability of future changes results in a non-ergodic model for which the fundamentals are largely unknown. We consider the canonical two-user broadcast erasure channel in which channel statistics vary at a priori known points. We consider a multi-modal setting with two non-transient modes (whose lengths scale linearly with the blocklength) and an arbitrary number of transient modes. We provide a new set of outer-bounds on the capacity region of this problem when the encoder has access to causal ACK/NACK feedback. The outer-bounds reveal the significant role of the non-transient mode with higher erasure probability both on the outer and the inner bounds. We show the outer-bounds are achievable in non-trivial regimes, characterizing the capacity region for a wide range of parameters. We also discuss the regimes where the inner and outer bounds diverge and analyze the gap between the two. A key finding of this work is the significant gain of inter-modal coding over the separate treating of individual modes.
\end{abstract}

\begin{IEEEkeywords}
Multi-modal channels, non-ergodic fading, controllable statistics, feedback communications, channel morphing.
\end{IEEEkeywords}


\section{Introduction}
\label{Section:Introduction}

The ever-increasing wireless data demand is pushing communication in beyond 5G networks to higher and higher frequency bands where the abundance of bandwidth could potentially support much higher data rates, but where communication is also hindered by high path-loss and frequent blockage and interruptions~\cite{DebbahTHz22,taherkhani2020performance,ThzNoise,peng2018statistical}.
To alleviate these challenges, several new ideas and technologies have emerged.
For instance, movable antenna (MA)~\cite{EldarMA,zhu2023movable} and reconfigurable intelligent surface (RIS)~\cite{DiRenzo-2022-RIS-metasurface} enable channel morphing and provide the transmitter(s) with the opportunity to control channel statistics to better meet the desired objectives such as enhanced coverage, security, or multi-user scaling~\cite{zheng2020joint,nassirpour2023beamforming,zou2022scisrs,hoang2023secrecy,van2023enhancing,hoang2024physical}. 
The fact that channel statistics may be controlled and altered, brings upon a new theoretical paradigm of non-ergodic channels.
Traditionally in information theory, ergodic channels refer to those in which channel statistics are governed by independently and identically distributed processes~\cite{el2011network}.
Channel morphing is not the only reason to consider non-ergodic networks.
In low earth orbit (LEO) satellite communications, the trajectory of satellites and the resulting statistical variations can be well predicted~\cite{OutageLEO,LEO_OFDM_CE}.
Thus, a fundamental understanding of the capacity region of non-ergodic networks with controllable and/or predictable statistics is of great relevance.

The frequent interruptions of the communication links in higher bands can be modeled by erasure links, while the non-ergodic nature of MA/RIS-aided channels and LEO satellite communications can be captured by the a priori knowledge of statistical changes. 
Thus, to model these unique characteristics and shed light on the capacity of non-ergodic multi-user channels in higher bands, we introduce and study the multi-modal broadcast packet erasure channels (\BEC s). 
In the \name, during a communication block of length $n$, the channel may undergo several ``modes,'' where loosely speaking, channel statistics remain unchanged during a mode. 
We distinguish two classes of modes, the first is transient, which is motivated by the physical changes in MAs~\cite{zhu2024historical} or the delay in reconfiguring RISs~\cite{nassirpour2023beamforming}, and last for a combined length scaling as $\smallO(n)$. A non-transient mode on the other hand has a $\mathcal{O}(n)$ length and represents stable channel conditions.

More specifically, we consider a \name~with two receiver terminals and assume the channel experiences two distinct non-transient modes, namely modes $\A$ and $\B$, and a transient mode in between, namely mode $\T$, over a block-length of $n$.
During mode $\A$ with length $\na$, erasure probabilities are governed by independently and identically distributed (i.i.d.) across space and time Bernoulli $(1-\da)$ processes, whereas during modes $\T$ with length $\nt$ and $\B$ with length $\nb = n - \na - \nt$, they are governed by i.i.d. Bernoulli $(1-\dt)$ and $(1-\db)$ processes, respectively.
We assume the erasure probabilities and the lengths of the non-transient modes are known a priori and globally. 
This latter assumption is justified as for instance an MA or an RIS is controlled by the transmitter and the changes may be communicated to other nodes.
We finally assume the availability of common ACK/NACK signaling (\emph{i.e.}, delayed channel state feedback) from each receiver to the other nodes. 
Our contributions are then multi-fold.

We present a new set of outer-bounds on the capacity region of the non-ergodic \name~with feedback. 
The overall outer-bound region is the intersection of three individual regions: the most obvious region is that of the same problem but with instantaneous knowledge of the channels; the other regions rely on quantifying an extremal entropy inequality, which quantifies the amount of information leakage to an unintended user in our non-ergodic, multi-modal problem with feedback. 
The two regions play different roles. 
One relies on the minimum information leakage limit, while the other assumes the stronger non-transient mode is dominant throughout communications and quantifies the penalty incurred by making such an assumption. 
We show that \emph{transient} modes do not affect the asymptotic inner and outer bounds as long as the total length of these modes scales as $\smallO(n)$.

Interestingly, for the outer-bound region that relies on the minimum information leakage, the slope of the boundaries is governed by the non-transient mode with the weaker forward links. 
We show that this region is achievable for a wide set of parameters. 
The other region also presents an important notion, which is the penalty incurred when assuming the non-transient mode with the stronger forward links takes up most of the communication block. 
When this assumption is indeed true, the outer-bounds are close to the inner-bounds, but they quickly diverge as the assumption no longer holds. 

We then present an opportunistic inter-modal coding strategy, \emph{i.e.}, coding across the two non-transient modes, that achieves a strictly larger rate region compared to separate treatment of these modes, \emph{i.e.}, intra-modal coding. 
We show that for a wide range of parameters, the inner and outer bounds match, thus characterizing the capacity region in such cases.
The key idea in the inter-modal coding is to shift the multi-cast phase of network coding to the mode with stronger channels.
This is intuitively beneficial as multi-cast satisfies multiple users at the same time and if it can be done at a higher rate, then the overall achievable region would be larger.
When the relative lengths of the modes are such that it is not possible to strike the perfect balance, the inner and outer bounds deviate.
We discuss different variations of the strategy and explore potential improvements.

\noindent \textbf{Related work:} Shannon feedback does not increase the capacity of discrete memoryless point-to-point channels~\cite{shannon1956zero}, and provides bounded gains in the multiple-access channel~\cite{gaarder1975capacity,ozarow1984capacity}.  
On the other hand, feedback can provide significant gains when it comes to broadcast~\cite{maddah2012completely,VahidMaddah-Ali_16} and interference~\cite{suh2011feedback,AlirezaBFICDelayed,vahid2016two,vahid2018throughput,vahid2021topological} channels.
In the context of \BEC s, feedback capacity has been an active area of research. 
Most results consider the case in which all receivers provide ACK/NACK feedback to the transmitter~\cite{georgiadis2009broadcast,Wang_erasure_2,Tassiulas_erasure_K}; while the study of \BEC~with one-sided~\cite{sc2016ISIT,he2017two,lin2021capacity,chu2023broadcast} and partial/intermittent~\cite{dueck1980partial,VenkataramananPradhan_13,sclin2018IT,shayevitz2012capacity,IFB-ITW,vahid2021erasure} feedback has revealed the usefulness of imperfect feedback. 
Further, \BEC s with receiver cache have been studied in~\cite{bidokhti2016erasure,vahid2021content}. Perhaps, the most relevant result to this paper is the study of \BEC s in~\cite{heindlmaier2018capacity}, which assumes the mode at each time instant is independently and identically drawn from some underlying distribution, and thus corresponds to an ergodic channel. We will further discuss this related work in Remark~\ref{remark:Bidokhti}.

\noindent \textbf{Organization:} We present the \name~model in Section~\ref{Section:Problem_BIC}, and the main findings and insights of this work in Section~\ref{Section:Main_BIC}. The converse and achievability proofs are presented in Sections~\ref{Section:Converse_BiModal} and~\ref{Section:Achievability-bimodal}, respectively. Section~\ref{Section:Conclusion_BiModal} concludes the paper and discusses future steps.


\section{Problem Formulation}
\label{Section:Problem_BIC}

We revisit the classical two-user broadcast packet erasure channel (\BEC) with predictable statistical variations and (Shannon) feedback. We note that: (1) in \BEC s, the causal feedback essentially boils down to channel state information (CSI) feedback as the transmitter already knows the transmit signal and there is no other noise beyond channel erasure, and (2) the predictability of future variations in channel statistics would result in a non-ergodic setting, which differentiates this work from prior results.  


\noindent \underline{\bf \Name:} We assume communication happens over a blocklength of $n \in \mathbb{Z}^+$. We start with a couple of definitions first. A ``mode'' is a period of time in which the channel statistics (\emph{i.e.}, erasure probabilities of the forward links) remain unchanged. Then, a ``transient mode'' is a mode whose length scales as $\smallO(n)$, where we follow the standard Landau small $\smallO$ notation. As we will show in our results, the transient modes do not affect the asymptotic inner and outer bounds and thus, we may relax the assumption that the channel statistics must remain unchanged within a transient mode without affecting the results of this paper. Figure~\ref{Fig:MultiModal-BEC} depicts a two-user \name~with two non-transient modes, namely $\A$ and $\B$, and one transient mode, namely $\T$.  Further, in this work, we limit our attention to a specific subset of \name~with no more than \emph{two non-transient modes} and one transient mode in between. The results, as we will discuss later, can be easily extended to the case in which we have multiple transient modes as long as the total length of these transient modes is $\smallO(n)$. 

\begin{figure}[!ht]
\centering
\includegraphics[width = 0.35\columnwidth]{./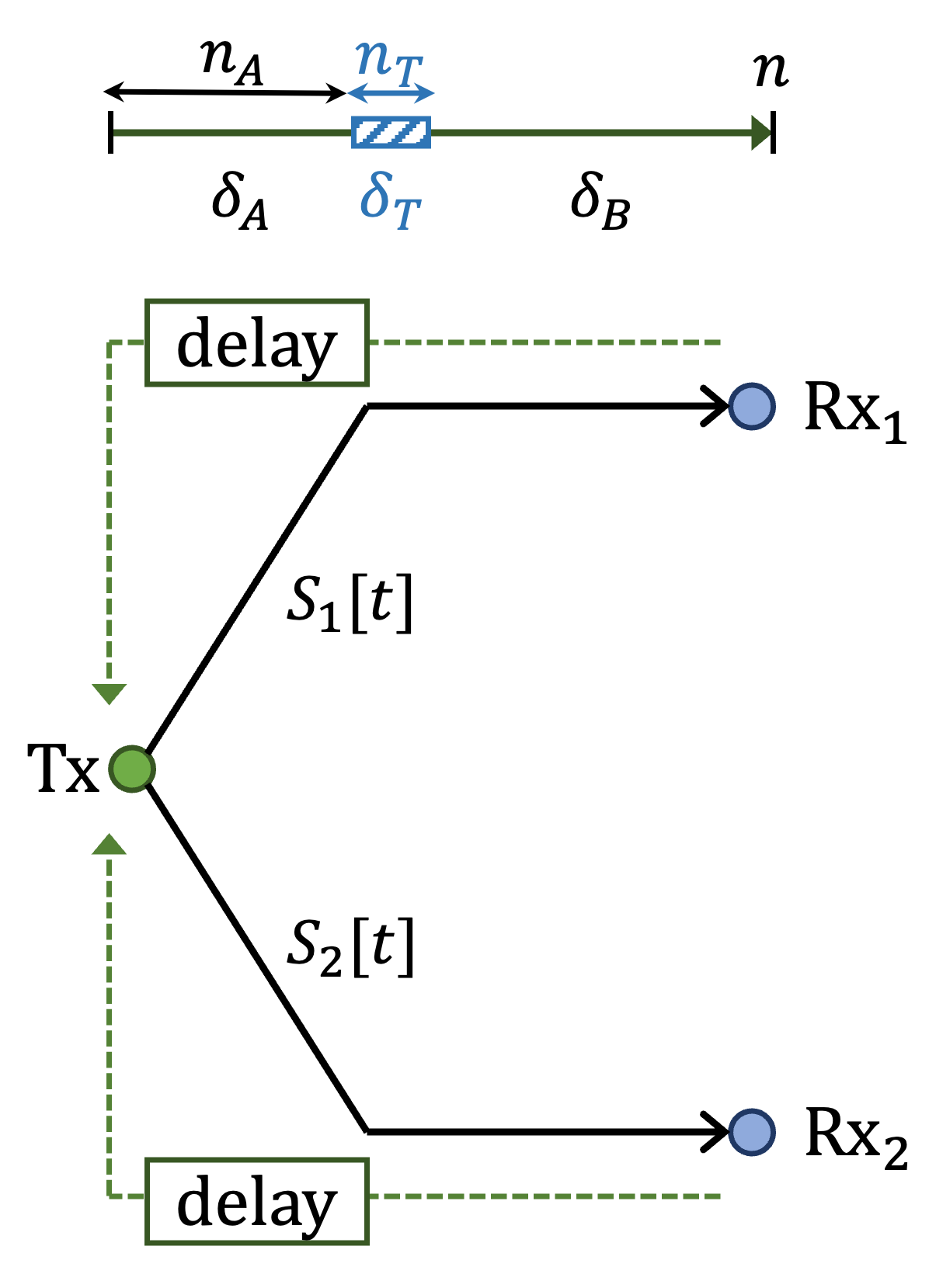}
\caption{Two-user \name~with (Shannon) feedback. We assume two non-transient modes and one (or more) transient modes with a total of length of $\smallO(n)$. 
\label{Fig:MultiModal-BEC}}
\end{figure}

\noindent \underline{\bf Channel statistics:} We assume during the first non-transient mode, the forward channel links, $S_i[t], i =1,2$, are governed by i.i.d. (over time and across users) Bernoulli $(1-\da)$ processes, and that this first mode terminates at some a priori known time, $1\leq n_{\A} \in \mathbb{Z}^+ \leq n$. Note that we assume $\na$ is $\mathcal{O}(n)$. During the transient mode, the forward channel links are governed by i.i.d. Bernoulli $(1-\dt)$ processes. Finally, during the second non-transient mode, the forward channel links are governed by i.i.d. Bernoulli $(1-\db)$ processes.

\begin{remark}[Orderly channels and ergodicity]
    In information theory, we traditionally assume two concepts in communication channels. First is the orderly nature of communications, which means if a symbol is transmitted before another, then its corresponding output will be delivered first. Second is ergodicity in the sense that the underlying processes that govern channel statistics are temporally independent. Recently, the orderly nature of communication channels have been revisited thanks to new paradigms such as macromolecular data storage systems~\cite{shomorony2021torn,outoforder}. The current work considers non-ergodic channels in which future statistical changes are known a priori.
\end{remark}


\noindent \underline{\bf Messages:} We follow the standard model where the transmitter, $\msf{Tx}$, wishes to transmit two independent messages (files), $W_1$ and $W_2$, to two receiving terminals $\msf{Rx}_1$ and $\msf{Rx}_2$, respectively, over $n$ channel uses. We note that no common message is included in this work. 
Each message, $W_i$, contains $|W_i|=m_i=nR_i$ data packets (or bits) where $R_i$ is the rate for user $i$, $i=1,2$. 
For simplicity and when convenient, we also denote message $W_1$ and $W_2$ as bit vectors $\vec{a} = \left( a_1,a_2,\ldots,a_{m_1} \right)$ and $\vec{b} = \left(b_1,b_2,\ldots,b_{m_2} \right)$, respectively. 
Here, we note that each packet is a collection of encoded bits, however, for simplicity and without loss of generality, we assume each packet is in the binary field, and we refer to them as bits. 
Extensions to broadcast packet erasure channels where packets are in large finite fields are straightforward as done in~\cite{ghorbel2016content,vahid2014communication}.

\noindent \underline{\bf Input-output relationship:} At time instant $t$, $1 \leq t \leq n$, the messages are mapped to channel input $X[t] \in \mathbb{F}_2$, and the corresponding received signals at $\msf{Rx}_1$ and $\msf{Rx}_2$ are given by:
\begin{align}
\label{eq_DL_channel}
Y_1[t] = S_1[t] X[t]~~ \; \mbox{and} \;~~ Y_2[t] = S_2[t] X[t],
\end{align}
respectively, where $\lbp S_i[t]\rbp$ denote the forward channels described above and are known to the receivers. When $S_i[t]=1$, $\mathsf{Rx}_i$ receives $X[t]$ noiselessly; and when $S_i[t]=0$, the receiver understands an erasure has occurred\footnote{This is an alternative way of describing the channel output as opposed assuming $\{ 0, \epsilon, 1\}$ as the possible output set.}. 

\noindent \underline{\bf CSI feedback and other assumptions:} We assume the receivers are aware of the instantaneous channel state information (\emph{i.e.} global CSIR). For the transmitter, on the other hand, we assume it learns the channel realizations of the forward links with unit delay (\emph{i.e.} delayed CSIT) but is aware of the erasure probabilities for only the non-transient modes. We further assume the duration of the first non-transient mode $\mathrm{n_A}$ is known globally. The knowledge of erasure probabilities, $\na$, and $\nt$ is considered as side-information denoted by $\si$. This side-information can be naturally obtained for MA/RIS. Furthermore, in LEO communication, the satellite trajectory would be known a priori, which is very helpful to predict the channel statistics and $\na$. We note that the transmitter only needs a subset of the $\si$ to achieve the results. For the outer-bounds, we enhance the channel by providing the entire $\si$ to the transmitter, but for achievability, we only rely on the knowledge corresponding to non-transient modes at the transmitter.

\begin{remark}[Feedback delay]
As mentioned earlier, a forward data packet is typically a long sequence of bits (typically $1000$s) , whereas the CSI feedback is essentially a short (one bit or a few bits at most) ACK/NACK signal. This observation justifies the global CSIR and the delayed CSIT assumptions above.
\end{remark}

\noindent \underline{\bf Encoding:} The delayed CSIT assumption and the availability of side information impose certain constraints on the encoding function. More specifically, at time index $t$, the encoding function $f_t(.)$ is described as:
\begin{align}
\label{eq_enc_function}
X[t] = f_t\lp W_1, W_2, S^{t-1}, \si \rp,
\end{align}
for $S^{t-1}=(S_1^{t-1}, S_2^{t-1})$. For simplicity and to avoid introducing additional notation, we included $\si$ in the encoding function, but as we will show in Section~\ref{Section:Achievability-bimodal}, the transmitter only relies on the knowledge related to non-transient modes.
 
\noindent \underline{\bf Decoding:} Each receiver $\msf{Rx}_i$, $i=1,2$ based on the global CSIR assumption knows the CSI across the entire transmission block, $S^n$. Then, the decoding function is $\varphi_{i,n}\left( Y_i^n, S^n, \sf{SI} \right)$. An error occurs whenever $\widehat{W}_i \neq W_i$. The average probability of error is given by:
\begin{align}
\lambda_{i,n} = \mathbb{E}[P(\widehat{W}_i \neq W_i)],
\end{align}
where the expectation is taken with respect to the random choice of the transmitted messages.

\noindent \underline{\bf Capacity region:} We say that a rate-pair $(R_1,R_2)$ is achievable, if there exist a block encoder at the transmitter and a block decoder at each receiver, such that $\lambda_{i,n}$ goes to zero as the block length $n$ goes to infinity. The capacity region, $\mathcal{C}$, is the closure of the set of all achievable rate-pairs. 


\section{Main Results \& Insights}
\label{Section:Main_BIC}

In this section, we present our main findings, and provide further insights and intuitions about the results. First, we introduce some notations based on which we define a set of regions to facilitate the expression of the main results. Then, we provide a set of outer bounds on the capacity region of the \name~with CSI feedback. Next, we provide an achievability scheme for the two-user \name~that achieves the outer-bounds under certain scenarios and otherwise outperforms the scheme that treats the two non-transient modes separately.

\subsection{Parameters \& Definitions}

We define the average erasure probability, $\bar{\delta}$, as:
\begin{align}
    \label{Eq:averagedelta}
    \bar{\delta} \overset{\triangle}= \lim_{n \rightarrow \infty}\frac{\na \da + \nt \dt + (n-\na-\nt ) \db}{n} = \eta \da + (1-\eta) \db,
\end{align}
for $\eta \overset{\triangle}= \lim_{n \rightarrow \infty} \na/n$, and where we used the fact that $\nt$ is $\smallO(n)$.
We further define: 
\begin{align}
    \ba \overset{\triangle}= 1 + \da, \quad \bb \overset{\triangle}= 1 + \db, \quad \bar{\beta} \overset{\triangle}= 1 + \bar{\delta}. 
\end{align}
We note that $\ba$ and $\bb$ would have defined the slope of the boundaries of the capacity region of \BEC s with CSI feedback where forward channels where governed only by $(1-\da)$ or $(1-\db)$, respectively. Moreover, if we were looking at a problem where the forward channels were governed by i.i.d. Bernoulli$(1-\bar{\delta})$, then $\bar{\beta}$ would have played the same role. We set:
\begin{align}
    \label{Eq:BetaMax}
    \bm \overset{\triangle}= \max\{ \ba, \bb \}, \qquad \bmin \overset{\triangle}= \min\{ \ba, \bb \}.
\end{align}

Finally, we define the following set of regions:
\begin{equation}
\label{Eq:Region-Outer-da}
\mathcal{C}_{1} \equiv 
\left\{ \begin{array}{ll}
0 \leq \bm R_1 + R_2 \leq \bm \left( 1 - \bar{\delta} \right), & \\
0 \leq  R_1 + \bm R_2 \leq \bm \left( 1 - \bar{\delta} \right), &  
\end{array} \right.
\end{equation}
and
\begin{equation}
\label{Eq:Region-Outer-db}
\mathcal{C}_{2} \equiv 
\left\{ \begin{array}{ll}
0 \leq R_i \leq \left( 1 - \bar{\delta} \right), & \\
\bmin R_1 + R_2 \leq \bmin \left( 1 - \bar{\delta} \right) + \kappa, & \\
R_1 + \bmin R_2 \leq \bmin \left( 1 - \bar{\delta} \right) + \kappa, &  
\end{array} \right.
\end{equation}
where
\begin{align}
\label{Eq:kappa}
\kappa = \mathbbm{1}_{\max\{\da,\db\} = \da} \eta/\ba (1-\da^2) + \mathbbm{1}_{\max\{\da,\db\} = \db} (1 - \eta)/\bb  (1 - \db^2), 
\end{align}
and finally,
\begin{equation}
\label{Eq:Region-Outer-full}
\mathcal{C}_{3} \equiv 
\left\{ \begin{array}{ll}
0 \leq R_i \leq \left( 1 - \bar{\delta} \right), & \\
0 \leq R_1 + R_2 \leq \eta (1-\da^2)+ (1 -\eta )(1-\db^2). &  
\end{array} \right.
\end{equation}


\subsection{Statement of the Main Results}

The first result establishes an outer-bound on the capacity region of our problem. 

\begin{theorem}
\label{THM:Bimodal-Outer}
For the two-user \name~with statistical variations and CSI feedback as described in Section~\ref{Section:Problem_BIC}, we have:
\begin{equation}
\label{Eq:Bimodal-Outer}
\mathcal{C} \subseteq \mathcal{C}_{1} \cap \mathcal{C}_{2} \cap \mathcal{C}_{3},
\end{equation}
where $\mathcal{C}_{1}, \mathcal{C}_{2},$ and $\mathcal{C}_{3}$ are defined in \eqref{Eq:Region-Outer-da}, \eqref{Eq:Region-Outer-db}, and \eqref{Eq:Region-Outer-full}, respectively.
\end{theorem}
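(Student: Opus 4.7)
The plan is to prove the three containments $\mathcal{C} \subseteq \mathcal{C}_i$ for $i=1,2,3$ separately and then intersect. All three derivations start from Fano's inequality combined with the usual trick of handing $W_1$ as genie side information to $\msf{Rx}_2$, which for any reliable code yields
\begin{align*}
n(R_1 + R_2) \leq H(Y_1^n \mid S^n) + H(Y_2^n \mid W_1, S^n) + n\epsilon_n,
\end{align*}
since $Y_2^n$ is a deterministic function of $(W_1, W_2, S^n)$. The elementary time-averaged bound $H(Y_1^n \mid S^n) \leq n(1-\bar{\delta}) + \smallO(n)$ and the individual-rate caps $R_i \leq 1-\bar{\delta}$ (needed for $\mathcal{C}_2$ and $\mathcal{C}_3$) fall out of the per-user point-to-point capacity with feedback, using that $\nt$ is $\smallO(n)$.

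For $\mathcal{C}_3$ I would enhance the channel by granting the encoder non-causal access to the entire state sequence $S^n$; this only enlarges the capacity region. In each non-transient block, the symmetric broadcast packet erasure channel with full CSIT has a sum-rate bounded by the probability that \emph{at least one} forward link is ON, namely $1-\delta^2$ (via the standard Georgiadis--Tassiulas argument in which the encoder schedules only OR-covered slots and applies network coding). Weighting the two non-transient modes by $\eta$ and $1-\eta$ and absorbing the transient block into $\smallO(n)$ reproduces the sum-rate constraint in $\mathcal{C}_3$, while the per-user caps come directly from point-to-point feedback capacity with erasure rate $\bar{\delta}$.

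The two ``leakage'' regions $\mathcal{C}_1$ and $\mathcal{C}_2$ both rely on bounding the term $H(Y_2^n \mid W_1, S^n)$. Chain-ruling in time and invoking the symmetric-erasure single-letter identities $H(Y_1[t] \mid \text{past}, S^n) = (1-\delta_t)\, H(X[t] \mid \text{past}, S^n)$ and $H(Y_2[t] \mid Y_1[t], \text{past}, S^n) \leq \delta_t\, H(Y_1[t] \mid \text{past}, S^n)$, one arrives at a mode-dependent inequality with coefficient $\beta_t = 1+\delta_t$. For $\mathcal{C}_1$ I would uniformly upper-bound every $\beta_t$ by $\bm$ (with $\dt$ contributing only $\smallO(n)$), giving $H(Y_2^n \mid W_1, S^n) \leq \bm\, H(Y_1^n \mid W_1, S^n) + \smallO(n)$; plugging into the Fano chain together with $H(Y_1^n \mid W_1, S^n) \leq n(1-\bar{\delta}) - nR_1$ collapses to $\bm R_1 + R_2 \leq \bm(1-\bar{\delta})$, with the symmetric constraint obtained by swapping the users. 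For $\mathcal{C}_2$ I would keep the tight coefficient $\bmin$ in the stronger mode and peel the weaker mode's contribution off using the direct per-slot cap $H(Y_2[t] \mid \text{past}, S^n) \leq 1-\delta_{\mathrm{weak}}$; summing over the weaker block of fractional length $\eta$ or $1-\eta$ produces a slack equal to $\eta(1-\da)$ or $(1-\eta)(1-\db)$, which is exactly $\kappa$ in~\eqref{Eq:kappa} after writing $(1-\delta^2)/\beta = 1-\delta$. This gives $\bmin R_1 + R_2 \leq \bmin(1-\bar{\delta}) + \kappa$.

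The principal technical hurdle will be justifying the multi-modal extremal entropy steps in the presence of feedback that couples the modes: because the encoder uses $S^{t-1}$ causally, outputs from mode $\A$ propagate into the conditional distribution of inputs in mode $\B$, and one must verify that the single-slot symmetric-erasure identities survive under the combined conditioning on $(W_1, Y_1^{t-1}, Y_2^{t-1}, S^n)$ even as the marginal of $X[t]$ is reshaped by mode-$\A$ feedback. A secondary hurdle is the precise algebraic accounting in the $\mathcal{C}_2$ derivation, so that the weak-mode slack collapses to exactly $\kappa$ rather than a larger, less informative penalty; the indicator structure in \eqref{Eq:kappa} suggests that which mode supplies the slack must be chosen adversarially based on $\max\{\da,\db\}$, and the proof must treat the two cases symmetrically.
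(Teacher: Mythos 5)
Your proposal is correct and follows essentially the same route as the paper: Fano plus a genie message, a per-slot extremal entropy ("leakage") inequality with mode-dependent coefficient $\beta_t=1+\delta_t$ that is uniformly bounded by $\bm$ for $\mathcal{C}_1$ and whose unfavorable (weaker) mode is peeled off into the additive slack $\kappa$ for $\mathcal{C}_2$, and direct entropy counting for the instantaneous-feedback region $\mathcal{C}_3$; the two "hurdles" you flag are exactly the steps handled in the paper's Lemmas~\ref{Lemma:Leakage_BIC_No} and~\ref{Lemma:Leakage_BIC_kappa} (independence of $X[t]$ and $S[t]$ given the past, and the bookkeeping that yields $\kappa=\eta(1-\da)$ or $(1-\eta)(1-\db)$). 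The only caution is that the weighted bounds must be assembled from $nR_2\leq H(Y_2^n\mid W_1,S^n)$ alone (not from your opening sum-rate display, which would yield the strictly weaker slope $1+\bm$), as your subsequent description for $\mathcal{C}_1$ correctly does.
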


\begin{remark}[Comparison of Theorem~\ref{THM:Bimodal-Outer} to the preliminary work~\cite{Bimodal-ISIT}]
Compared to~\cite{Bimodal-ISIT}, Theorem~\ref{THM:Bimodal-Outer} presents several improvements. The first improvement is the inclusion of non-transient modes in the channel model and quantifying the corresponding impact on the outer-bounds. The second improvement is the inclusion of $\mathcal{C}_{2}$, which ensures the capacity region matches that of a uni-modal problem in either extreme case $\eta = 0$ or $1$ as expected.
\end{remark}

To provide further insights on the outer-bounds, we first look at the three regions that are used in Theorem~\ref{THM:Bimodal-Outer}. $\mathcal{C}_{3}$ is the capacity region with \emph{instantaneous} feedback, which automatically serves as an outer-bound for the causal-feedback scenario we consider. To interpret $\mathcal{C}_{1}$ and $\mathcal{C}_{2}$, we recall that if instead of having a multi-modal structure, the forward channels were instead governed by Bernoulli$(1-\bar{\delta})$, with $\bar{\delta}$ being the average of $\da$ and $\db$, \emph{i.e.}, an ergodic setting, then $\bar{\beta}$ would have defined the slope of the boundaries of the region. Now, for $\mathcal{C}_{2}$, the slope of the boundaries is dominated instead by the \emph{smaller} of the two erasure probabilities, \emph{i.e.}, one that results in $\bmin$; while for $\mathcal{C}_{1}$, the slope of the boundaries is dominated instead by the larger of the two erasure probabilities, \emph{i.e.}, one that results in $\bm$. 

The more interesting region is indeed $\mathcal{C}_{1}$, which as we will see later, can be achieved for a wide range of parameters. On the other hand, $\mathcal{C}_{2}$, is included to ensure that on the extreme case where the length of the mode with the smaller erasure probability is close to $n$, we achieve the capacity as expected. This latter outer-bound region includes a correction term $\kappa$ given in \eqref{Eq:kappa} to compensate the preference given to the mode with lower erasure probability, and quickly deviates from the achievability strategy proposed later in Section~\ref{Section:Achievability-bimodal}. In both regions, the corner points come from the average probability of each link being active, \emph{i.e.} $(1-\bar{\delta})$.

\begin{remark}[Room for improvement]
\label{remark:improvement}
One may identify a weakness in the presented outer-bounds $\mathcal{C}_{1}$ and $\mathcal{C}_{2}$, which is the fact that the slope of the bounds is oblivious to the relative lengths of the non-transient modes. 
This does necessarily mean the bounds are loose and in fact, we will show that in non-trivial cases, the outer-bound region is indeed the capacity region of the problem.
However, based on the point highlighted above, the bounds could potentially be improved in other cases.
\end{remark}

The following theorem shows there exist non-trivial instances in which the outer-bound region in Theorem~\ref{THM:Bimodal-Outer} is achievable and matches the capacity region. The conditions presented in Theorem~\ref{THM:Bimodal-Outer} are of practical relevance: the erasure probability during the first mode is high and thus the MA/RIS is adjusted to improve the connectivity during the second mode.

\begin{theorem}
\label{THM:Bimodal-Inner}
For the two-user \name~with statistical variations and CSI feedback as described in Section~\ref{Section:Problem_BIC}, the outer-bounds in \eqref{Eq:Bimodal-Outer} are achievable when the transmitter has side-information only related to non-transient modes and:
\begin{align}
\label{Eq:Conditions}
\da \geq \db, \qquad \text{~and~} \qquad \eta \geq \left( 1 + \frac{\da(1-\da)}{2(1-\db)} \right)^{-1}.
\end{align}
\end{theorem}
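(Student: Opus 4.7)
The plan is to construct an inter-modal coding strategy that deliberately places the XOR multicast phase inside the stronger non-transient mode B (since $\db \leq \da$ by hypothesis). I will aim at the symmetric corner $R_1 = R_2 = (1+\da)(1-\bar\delta)/(2+\da)$ of $\mathcal{C}_1$, and then recover the remaining points of the outer-bound intersection by combining this construction with uncoded single-user transmission at the extreme points $(1-\bar\delta, 0)$ and $(0, 1-\bar\delta)$ together with convex time-sharing. Under the hypotheses of the theorem, I expect $\mathcal{C}_1$ to be contained in $\mathcal{C}_2 \cap \mathcal{C}_3$, so that achieving $\mathcal{C}_1$ is sufficient to achieve the full intersection.

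Concretely, I would partition mode A into two consecutive sub-blocks of lengths $\lambda\na$ and $(1-\lambda)\na$, for a parameter $\lambda \in [0,1]$ to be determined. Over the first sub-block, the transmitter alternates ARQ transmissions of bits of $W_1$ and $W_2$, repeating each bit until at least one receiver acknowledges it via the feedback link; a standard calculation shows each such bit ends up solely at the unintended receiver with probability $\da/(1+\da)$. Mode B is then devoted entirely to multicasting XORs of the two resulting piles of side information; each XOR is received by each user with probability $1-\db$, so both piles can be exhausted simultaneously at an effective rate of $(1-\db)$ per slot using any standard coded-broadcast realisation. Over the second mode-A sub-block the transmitter independently runs the classical single-mode three-phase feedback scheme entirely within mode A, contributing an additional symmetric rate of $(1-\lambda)\eta(1-\da^2)/(2+\da)$. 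Transient slots are used to continue whichever phase is active at the mode boundary; since they total $\smallO(n)$ slots they contribute only $o(1)$ to the rates, and in particular the transmitter need not know $\dt$ or $\nt$.

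The parameter $\lambda$ is fixed so that the side information produced by the first sub-block exactly matches the multicast capacity of mode B, which gives $\lambda = 2(1-\eta)(1-\db)/(\eta\da(1-\da))$. The feasibility constraint $\lambda \leq 1$ is precisely the hypothesis $\eta \geq [1 + \da(1-\da)/(2(1-\db))]^{-1}$ stated in \eqref{Eq:Conditions}; the boundary $\eta$ equal to the threshold corresponds to $\lambda = 1$ (a pure inter-modal scheme using all of both modes), while $\eta = 1$ corresponds to $\lambda = 0$ (the classical single-mode scheme in mode A only). A direct algebraic simplification collapses the sum $(1-\eta)(1+\da)(1-\db)/\da + (1-\lambda)\eta(1-\da^2)/(2+\da)$ into the desired $(1+\da)(1-\bar\delta)/(2+\da)$, matching the symmetric corner of $\mathcal{C}_1$.

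The main obstacles are twofold. First, converting the expected-value counts into a vanishing-error scheme: the geometric termination times of the ARQ sub-phases and of the multicast sub-phase each admit Chernoff-style concentration, so it will suffice to pad each sub-phase by an $\smallO(n)$ safety margin that can be hidden inside the transient slots (or at the boundary between the mode-A sub-blocks) without affecting the asymptotic rates. Second, verifying that $\mathcal{C}_1 \subseteq \mathcal{C}_2 \cap \mathcal{C}_3$ under the stated hypotheses, so that $\mathcal{C}_1$ is indeed the binding outer bound: the inclusion in $\mathcal{C}_2$ is immediate once the correction term $\kappa$ in \eqref{Eq:kappa} is expanded using $\da \geq \db$, while the inclusion in $\mathcal{C}_3$ reduces, after a few lines of algebra on the sum-rate faces, to the same threshold $\eta \geq [1 + \da(1-\da)/(2(1-\db))]^{-1}$, which is exactly why the theorem's hypothesis takes this form.
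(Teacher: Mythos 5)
Your construction is correct and rests on the same central idea as the paper's proof---push the XOR multicast phase into the stronger mode $\B$, reduce the intersection to $\mathcal{C}_1$ under the stated hypotheses, and hit the symmetric corner $(1+\da)(1-\bar{\delta})/(2+\da)$---but the bookkeeping is genuinely different. The paper uses a single pooled batch of $m$ packets per user: the raw phases occupy a prefix $n\alpha$ of mode $\A$ and the multicast phase straddles the tail $n(\eta-\alpha)$ of mode $\A$ plus all of mode $\B$, with $\alpha$ obtained as the root of a quadratic and the hypothesis on $\eta$ emerging as the condition $\alpha\leq\eta$. You instead time-share inside mode $\A$ between a purely inter-modal component (raw in the first sub-block, multicast confined to mode $\B$) and a self-contained classical three-phase scheme in the second sub-block, with the threshold emerging directly from $\lambda\leq 1$. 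The two schemes are in fact rearrangements of one another (your total raw time $\lambda\eta+2(1-\lambda)\eta/(2+\da)$ equals the paper's $\alpha$, and your packet counts add up to the paper's $m$), and your algebraic collapse to $(1+\da)(1-\bar{\delta})/(2+\da)$ checks out; what your version buys is that the feasibility condition reads off linearly rather than through the quadratic, and the boundary cases $\lambda=1$ and $\lambda=0$ are transparent. One imprecision to fix: the inclusion $\mathcal{C}_1\subseteq\mathcal{C}_3$ does \emph{not} reduce to the same threshold as \eqref{Eq:Conditions}. Comparing the sum-rate faces gives the weaker requirement $\eta\da(1-\da^2)\geq(1-\eta)(1-\db)\left(\da-\db(2+\da)\right)$, which is implied by, but not equivalent to, $\eta\geq\left(1+\frac{\da(1-\da)}{2(1-\db)}\right)^{-1}$; indeed, in the paper's running example the crossover of the $\mathcal{C}_1$ and $\mathcal{C}_3$ sum-rate bounds occurs at $\eta=16/23$ while the theorem's threshold is $32/35$, and in the intermediate range $\mathcal{C}_1$ is binding but not known to be achievable. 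The binding role of the hypothesis is therefore the scheme's feasibility, not the region inclusion. Otherwise your treatment of the corner points, time-sharing, concentration padding, and the irrelevance of $\nt,\dt$ matches the paper's level of rigor.
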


\begin{remark}[Comparison of Theorem~\ref{THM:Bimodal-Inner} to the preliminary work~\cite{Bimodal-ISIT}]
In~\cite{Bimodal-ISIT}, we simply provided one example to prove that there exist $\da, \db,$ and $\eta$ such that $\da \neq \db$, $(1-\da)(1-\db) > 0$, and $0 < \eta < 1$ for which the outer-bounds in \eqref{Eq:Bimodal-Outer} are achievable. Theorem~\ref{THM:Bimodal-Inner} enhances this result by extending the results to a wide range of parameters. We further analyze the scheme when the conditions above are not met and examine the gap to the outer-bounds.
\end{remark}

To prove Theorem~\ref{THM:Bimodal-Inner}, it suffices to show one example meeting the corresponding conditions for which the outer-bound region is achievable. 
The general idea in achieving the capacity region of the two-user \BEC~with feedback is to send packets intended for each user in separate phases and then recycle these packets into ``multi-cast'' packets that upon delivery would simultaneously benefit both users~\cite{georgiadis2009broadcast}. 
For the \name, the key idea is to push the multi-cast phase as much as possible to the mode with lower erasure probabilities to boost the achievable rates.
This way the network coding spans across different modes, which we refer to as inter-modal coding.
We show this approach strictly outperforms the strategy the treats the modes separately.

\begin{remark}[Comparison to prior work]
\label{remark:Bidokhti}
In \cite[Sec IV-C]{heindlmaier2018capacity}, a finite-state memoryless \BEC~was studied where the erasure probabilities of $S_i[t]$ were specified through the conditional distribution given the current state. In other words, there is another channel state that controls the time-varying erasure statistics of the \BEC. Note that our model can also be similarly described by choosing a two-state non-random state sequence where the state changes at fixed time $\na$. However, in~\cite{heindlmaier2018capacity}, the capacity region is found only when the random state sequence is independently and identically (i.i.d.) generated as \cite{TimeVaryWang}, which is fundamentally different from our setting where the state sequence is fixed. 
Finally, though in~\cite[Proposition 3]{wang2014MIMOerasurecapacity} the i.i.d. Markovian state assumption could be removed, the capacity region was not reported, but instead the linear network coding rate region was found where the encoder in~\eqref{eq_enc_function} is limited to linear functions.
\end{remark}

The proofs for Theorem~\ref{THM:Bimodal-Outer} and Theorem~\ref{THM:Bimodal-Inner} are presented in Sections~\ref{Section:Converse_BiModal} and~\ref{Section:Achievability-bimodal}, respectively.

\begin{figure}[!ht]
\centering
\includegraphics[width = .5\columnwidth]{./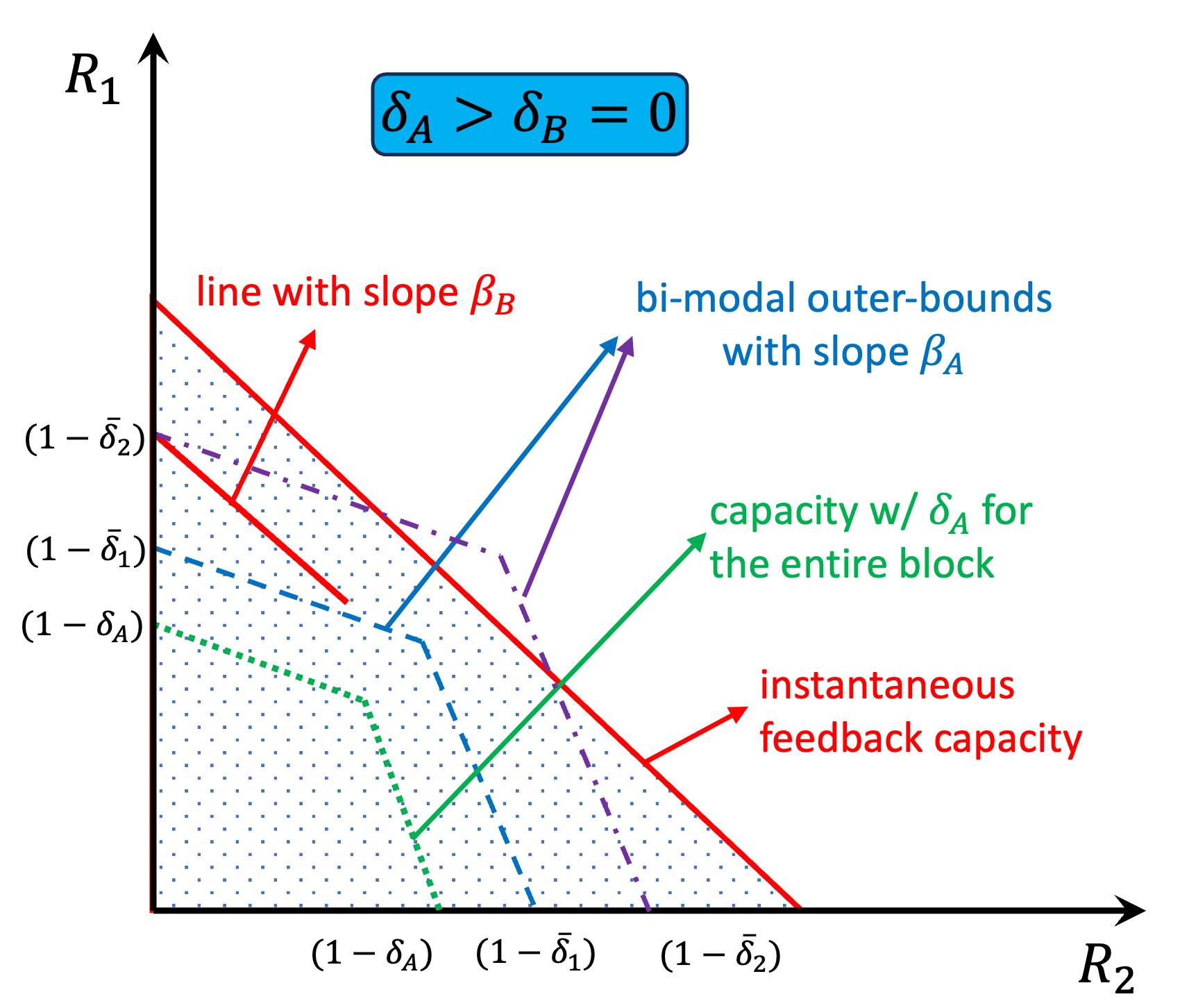}
\caption{Illustration of the outer-bound region of Theorem~\ref{THM:Bimodal-Outer} for $\da > \db = 0$ and various values of $\na$.\label{Fig:Region-BiModal}}
\end{figure}

\subsection{Illustration of the results}
\label{Section:Illustration}

In this sub-section, we illustrate the results and provide further insights. For simplicity and to keep the comparison meaningful to our preliminary work~\cite{Bimodal-ISIT}, in this part we consider a \name~with $\da > \db = 0$, which means during the second mode no erasures occur and that $\bm = \ba > \bb = 1$. 
Then, if $\na = 0$, then the capacity region is described by:
\begin{align}
R_1 + R_2 \leq 1,
\end{align}
which matches the instantaneous feedback capacity (as no erasures would occur); if $\na = n$, then the capacity region is described by:
\begin{equation}
\label{Eq:Bimodal-onlymodeA}
\left\{ \begin{array}{ll}
0 \leq \ba R_1 + R_2 \leq \ba \left( 1 - \da \right), & \\
0 \leq  R_1 + \ba R_2 \leq \ba \left( 1 - \da \right), & 
\end{array} \right.
\end{equation}
which is the capacity region of a \BEC~with homogeneous erasure probability of $\da$ for the entire communication block. 
Figure~\ref{Fig:Region-BiModal} (borrowed from~\cite{Bimodal-ISIT}) includes these two baselines as well as the outer-bound regions for two cases where $0<\na<n$. 
As depicted, the slope of the outer-bounds is dominated by $\ba$ as opposed to $\bar{\beta}$, and for this particular choices of $\da$ and $\db$, the outer-bound region given in~\eqref{Eq:Region-Outer-db} does not become active.
As the length of the second non-transient mode increases, meaning that $\na$ decreases, the instantaneous feedback outer-bound becomes active. 

\begin{figure}[!ht]
\centering
\includegraphics[width = .6\columnwidth]{./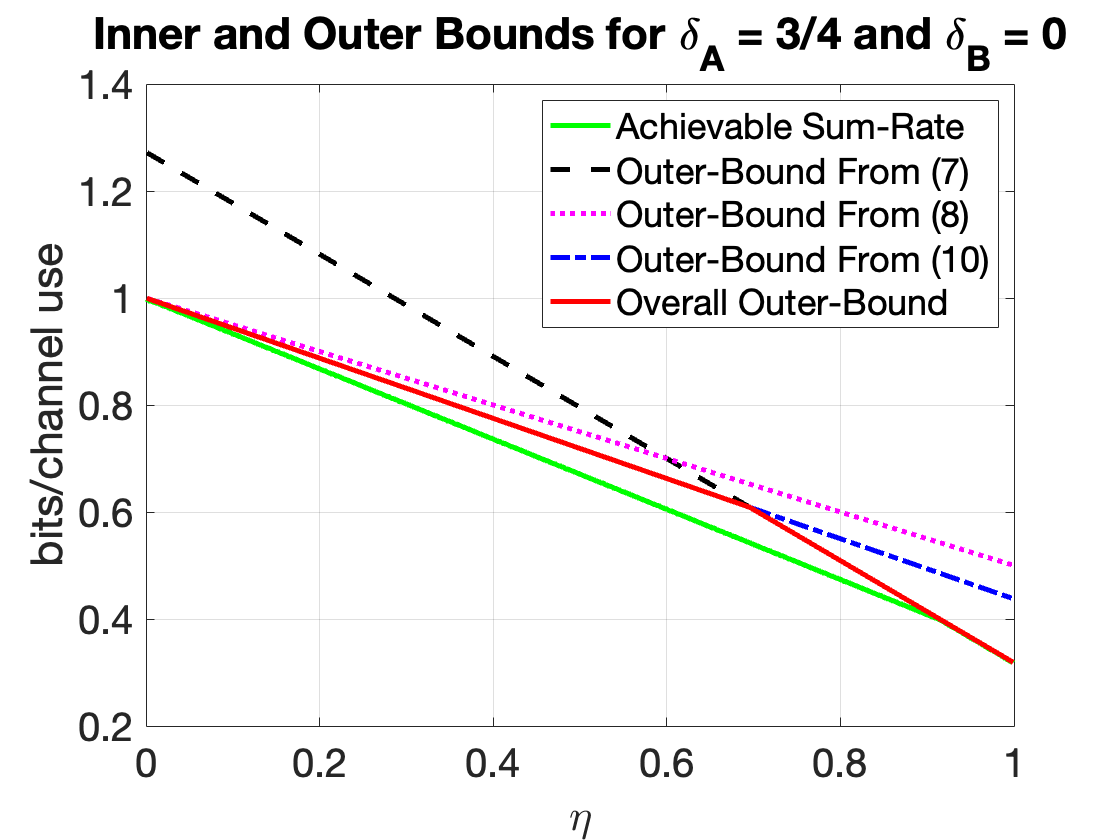}
\caption{Symmetric sum-rate inner-bound vs. outer-bounds for $\da = 0.75$ and $\db = 0$.\label{Fig:SumRate-BiModal}}
\end{figure}

While Figure~\ref{Fig:Region-BiModal} shows the entire region, Figure~\ref{Fig:SumRate-BiModal} focuses on the sum-rate outer-bound of the problem as implied by Theorem~\ref{THM:Bimodal-Outer}.
As we can see, depending on the value of $\eta$, different bounds may be active. 
For small values of $\eta$, the instantaneous outer-bound is active and as $\eta$ becomes larger, the new outer-bound region, $\mathcal{C}_1$, presented in~\eqref{Eq:Region-Outer-da} becomes active. 
We note that $\eta \in \{0, 1 \}$ corresponds to a homogeneous setting for which as expected the capacity is achievable. 
The non-trivial region is for $32/35 < \eta < 1$  for which the inter-modal achievability strategy is described in Section~\ref{Section:Achievability-bimodal} and is shown to achieve the outer-bounds thereby characterizing the capacity.

\begin{figure}[!ht]
\centering
\includegraphics[width = .6\columnwidth]{./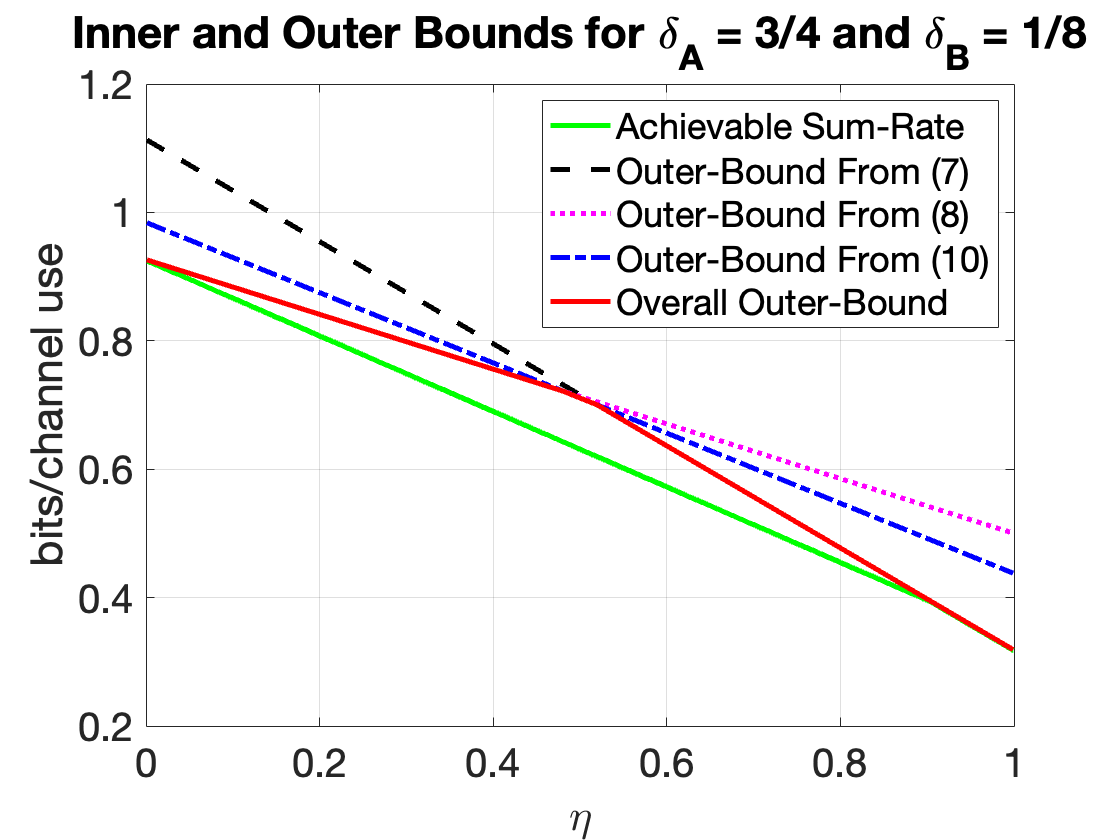}
\caption{Symmetric sum-rate inner-bound vs. outer-bounds for $\da = 0.75$ and $\db = 1/8$.\label{Fig:SumRate-BiModal-enhanced}}
\end{figure}

As we assumed $\db = 0$, the example presented above may appear somewhat contrived as feedback in the second non-transient mode is essentially useless. 
Moreover, for that example as seen in Figure~\ref{Fig:SumRate-BiModal}, the sum-rate outer-bound from \eqref{Eq:Region-Outer-db} is never active.
Thus, we instead look at an example where $\da = 3/4$ and $\db = 1/8$.
For this example, each one of the three regions defined in \eqref{Eq:Region-Outer-da}--\eqref{Eq:Region-Outer-full} contribute to the overall outer-bound at some value of $\eta$ as shown in Figure~\ref{Fig:SumRate-BiModal-enhanced}.

\section{Converse Proof of Theorem~\ref{THM:Bimodal-Outer}}
\label{Section:Converse_BiModal}

In this section, we derive the three outer-bound regions that contribute to the overall outer-bounds in Theorem~\ref{THM:Bimodal-Outer}. 
For the results in this part, we enhance the channel by providing the entire side-information to the transmitter (\emph{i.e.}, including any information related to the transient modes).
Any outer-bound on the capacity region of this enhanced channel will naturally serves as an outer-bound on our problem.
Later and in Section~\ref{Section:Achievability-bimodal}, we show that the transmitter does not need know $\nt$ or $\dt$ to achieve the promised rates.

\noindent \underline{$\mathbf{\mathcal{C}_{3}}$:} As mentioned earlier, this region corresponds to a network with instantaneous CSI feedback. For the first inequality in \eqref{Eq:Region-Outer-full} and for $i=1,2$, we have:
\begin{align}
n& R_i \leq H(W_i) \nonumber \\
&\overset{\mathrm{Fano}}\leq I(W_i; Y_i^n|S^n,\si) + n \upxi_n \nonumber \\
&=  H(Y_i^n|S^n,\si) - H(Y_i^n|W_i, S^n,\si) + n \upxi_n \nonumber \\
&\leq  H(Y_i^n|S^n,\si) + n \upxi_n \nonumber \\
&\leq {\na}(1-\da) + \nt (1-\dt) + (n-\na-\nt)(1-\db) + n \upxi_n,
\end{align}
and diving both sides by $n$ and taking the limit as $n \rightarrow \infty$ gives the desired result since $\nt$ scales as $\smallO(n)$.

The second inequality in \eqref{Eq:Region-Outer-full} is derived as follows:
\begin{align}
n&(R_1 + R_2) \leq H(W_1,W_2) \nonumber \\
&\overset{\mathrm{Fano}}\leq I(W_1, W_2; Y_1^n, Y_2^n|S^n,\si) + n \upxi_n \nonumber \\
&=  H(Y_1^n, Y_2^n|S^n,\si) - \underbrace{H(Y_1^n, Y_2^n|W_1, W_2, S^n,\si)}_{=~0} + n \upxi_n \nonumber \\
&\leq {\na}(1-\da^2) + \nt (1-\dt^2) + (n-\na-\nt)(1-\db^2) + n \upxi_n,
\end{align}
once again, diving both sides by $n$ and taking the limit as $n \rightarrow \infty$ gives the desired result since $\nt$ scales as $\smallO(n)$.

\noindent \underline{$\mathbf{\mathcal{C}_{1}}$:} Below, we derive the following outer-bound and the other bound is immediately derived by interchanging user IDs:
\begin{align}
\label{Eq:ConverseBound}
0 \leq R_1 + \bm R_2 \leq \bm \left( 1 - \bar{\delta} \right).
\end{align}

Suppose rate-tuple $\lp R_1, R_2 \rp$ is achievable. We have:
\begin{align}
\label{Eq:OuterBoundDerivation}
&n \left( R_1 + \bm R_2 \right) = H(W_1) + \bm H(W_2) \nonumber \\
& \overset{(a)}= H(W_1|W_2, S^n, \si) + \bm H(W_2| S^n, \si) \nonumber \\
& \overset{(\mathrm{Fano})}\leq I(W_1;Y_1^n|W_2, S^n, \si) + \bm I(W_2;Y_2^n|S^n,\si) + n \upxi_n \nonumber \\
& = H(Y_1^n|W_2, S^n, \si) - \underbrace{H(Y_1^n|W_1,W_2,S^n,\si)}_{=~0} \nonumber \\
& \quad + \bm H(Y_2^n|S^n,\si) - \bm H(Y_2^n|W_2,S^n,\si) + n \upxi_n \nonumber \\
& \overset{(b)}\leq \bm H(Y_2^n|S^n,\si) + \smallO(n) + n \upxi_n \nonumber \\
& \overset{(c)}\leq \bm \na \left( 1 - \da \right) + \bm (n-\na-\nt) \left( 1 - \db \right) + \smallO(n) + n\upxi_n \nonumber \\
& \overset{(d)}\leq n \bm \left( 1 - \bar{\delta} \right) + \smallO(n) + n\upxi_n,
\end{align}
where $\upxi_n \rightarrow 0$ as $n \rightarrow \infty$; $(a)$ follows from the independence of messages; $(b)$ follows from Lemma~\ref{Lemma:Leakage_BIC_No} below, which captures the interplay between the varying forward erasure probabilities and the delayed feedback; $(c)$ is true since the entropy of a binary random variable is at most $1$ (or $\log_2(q)$ for packets in $\mathbb{F}_q$) and the forward erasure probabilities are in two non-transient modes and one non-transient mode whose contribution is merged with the $\smallO(n)$ term; and $(d)$ follows form~\eqref{Eq:averagedelta}. Dividing both sides by $n$ and let $n \rightarrow \infty$, we get \eqref{Eq:ConverseBound}.

\begin{lemma}
\label{Lemma:Leakage_BIC_No}
For the two-user \name~with delayed channel state feedback as described in Section~\ref{Section:Problem_BIC} and $\bm$ is given in \eqref{Eq:BetaMax}, we have:
\begin{align}
\label{eq:lemma}
H\left( Y_1^n | W_2, S^n, \si \right) - \bm  H\left( Y_2^n | W_2, S^n, \si \right) \leq \smallO(n),
\end{align}
for any encoding function satisfying~\eqref{eq_enc_function}. 
\end{lemma}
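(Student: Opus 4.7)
The plan is to reduce the inequality to a per-slot analysis that accounts for the current mode. I would first break symmetry by writing
\begin{align*}
H(Y_1^n \mid W_2, S^n, \si) - H(Y_2^n \mid W_2, S^n, \si)
&= H(Y_1^n \mid Y_2^n, W_2, S^n, \si) - H(Y_2^n \mid Y_1^n, W_2, S^n, \si) \\
&\leq H(Y_1^n \mid Y_2^n, W_2, S^n, \si),
\end{align*}
so it suffices to establish that
$$H(Y_1^n \mid Y_2^n, W_2, S^n, \si) \leq (\bm - 1)\, H(Y_2^n \mid W_2, S^n, \si) + \smallO(n),$$
since $\bm - 1 = \max\{\da,\db\}$ and adding this to $H(Y_2^n\mid\cdot)$ on both sides yields the claimed bound.

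Next I would apply the chain rule and the fact that additional conditioning can only reduce entropy to pass from $Y_2^n$ to $Y_2^t$ inside each slot:
$$H(Y_1^n \mid Y_2^n, W_2, S^n, \si) = \sum_{t=1}^{n} H(Y_1[t] \mid Y_1^{t-1}, Y_2^n, W_2, S^n, \si) \leq \sum_{t=1}^{n} H(Y_1[t] \mid Y_1^{t-1}, Y_2^t, W_2, S^n, \si).$$
Then I would carry out the per-slot computation. Condition on a realization of $S^n$ so that $S_1[t],S_2[t]$ are fixed, and observe that $X[t]$ is independent of $(S_1[t],S_2[t])$ given the rest, since erasures are drawn independently of messages and past states. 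Inspecting the four erasure patterns shows that $H(Y_1[t]\mid Y_1^{t-1}, Y_2^t, W_2, S^n,\si)$ is nonzero only when $(S_1[t],S_2[t])=(1,0)$ — if $S_1[t]=0$ the output is $0$, and if $S_2[t]=1$ then $Y_2[t]=X[t]$ already reveals $Y_1[t]$. A parallel accounting shows $H(Y_2[t]\mid Y_1^{t-1}, Y_2^{t-1}, W_2, S^n,\si)$ is nonzero precisely when $S_2[t]=1$. Taking expectations over $(S_1[t],S_2[t])$ in each non-transient mode gives the identity
$$H(Y_1[t] \mid Y_1^{t-1}, Y_2^t, W_2, S^n, \si) = \delta_t\, H(Y_2[t] \mid Y_1^{t-1}, Y_2^{t-1}, W_2, S^n, \si),$$
where $\delta_t = \da$ on mode $\A$ and $\delta_t = \db$ on mode $\B$; on mode $\T$ the trivial bound by $1$ contributes at most $\nt = \smallO(n)$.

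To finish, I would bound $\delta_t \leq \max\{\da,\db\} = \bm - 1$ over the non-transient slots and then loosen the conditioning on the right-hand side to remove $Y_1^{t-1}$:
$$\sum_{t} H(Y_2[t] \mid Y_1^{t-1}, Y_2^{t-1}, W_2, S^n, \si) \leq \sum_{t} H(Y_2[t] \mid Y_2^{t-1}, W_2, S^n, \si) = H(Y_2^n \mid W_2, S^n, \si).$$
Combining these gives the desired $(\bm - 1) H(Y_2^n \mid W_2, S^n, \si) + \smallO(n)$ bound on $H(Y_1^n\mid Y_2^n, W_2, S^n, \si)$, which plugged back yields \eqref{eq:lemma}.

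The main obstacle will be the per-slot identity: one must carefully unpack that conditioning on $S^n$ fixes the current erasure pattern, that $X[t]$'s distribution given the past and $W_2$ does not depend on $(S_1[t],S_2[t],S^{t+1:n})$, and that $Y_2[t]$ adds no information beyond $Y_2^{t-1}$ once the pattern $(1,0)$ is selected. Handling the transient mode is straightforward via the trivial $H(Y_1[t]\mid\cdot)\leq 1$ bound and the fact that $\nt = \smallO(n)$, so it contributes only an asymptotically negligible additive term that gets absorbed into the $\smallO(n)$ remainder.
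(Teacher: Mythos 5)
Your proof is correct and is essentially the paper's argument in a slightly different packaging: the paper lower-bounds $H(Y_2^n\mid W_2,S^n,\si)$ by $\tfrac{1}{\bm}H(Y_1^n,Y_2^n\mid W_2,S^n,\si)+\smallO(n)$ via the per-slot identity $H(Y_1[t],Y_2[t]\mid\cdot)=(1-\delta^2)H(X[t]\mid\cdot)$, whereas you split the same joint entropy as $H(Y_2^n\mid\cdot)+H(Y_1^n\mid Y_2^n,\cdot)$ and bound the second term by $(\bm-1)H(Y_2^n\mid\cdot)+\smallO(n)$ using the equivalent per-slot identity $\delta(1-\delta)H(X[t]\mid\cdot)$ versus $(1-\delta)H(X[t]\mid\cdot)$. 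The key ingredients — the independence of $X[t]$ from current and future states given the past (the paper's step $(a)$), the mode-by-mode coefficient accounting, and absorbing the transient mode into $\smallO(n)$ — coincide exactly.
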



\begin{proof}
We have:
{\small
\begin{align}
\label{Eq:LemmaProof1}
&H\left( Y_2^n | W_2, S^n, \si \right) = \sum_{t=1}^{\na}{H\left( Y_2[t] | Y_2^{t-1},W_2, S^n, \si \right)} \nonumber \\
&~+ \sum_{t=\na+1}^{\na+\nt}{H\left( Y_2[t] | Y_2^{t-1},W_2, S^n, \si \right)} + \sum_{t=\na+\nt+1}^{n}{H\left( Y_2[t] | Y_2^{t-1},W_2, S^n, \si \right)} \nonumber \\
&= \sum_{t=1}^{\na}{(1-\da) H\left( X[t] | Y_2^{t-1},W_2, S[t] = 1, S^{t-1}, S^{t+1:n}, \si \right)} \nonumber \\
&+ \sum_{t=\na+1}^{\na+\nt}{(1-\dt) H\left( X[t] | Y_2^{t-1},W_2, S[t] = 1, S^{t-1}, S^{t+1:n}, \si \right)} \nonumber \\
&+ \sum_{t=\na+\nt+1}^{n}{(1-\db) H\left( X[t] | Y_2^{t-1},W_2, S[t] = 1, S^{t-1}, S^{t+1:n}, \si \right)} \nonumber \\
&\overset{(a)}= \sum_{t=1}^{\na}{(1-\da) H\left( X[t] | Y_2^{t-1},W_2, S^{n}, \si \right)} \nonumber \\
&+ \sum_{t=\na+1}^{\na+\nt}{(1-\dt) H\left( X[t] | Y_2^{t-1},W_2, S^{n}, \si \right)} \nonumber \\
&+ \sum_{t=\na+\nt+1}^{n}{(1-\db) H\left( X[t] | Y_2^{t-1},W_2, S^{n}, \si \right)} \nonumber \\
&\geq \sum_{t=1}^{\na}{(1-\da) H\left( X[t] | Y_1^{t-1}, Y_2^{t-1},W_2, S^{n}, \si \right)} \nonumber \\
&+ \sum_{t=\na+1}^{\na+\nt}{(1-\dt) H\left( X[t] | Y_1^{t-1}, Y_2^{t-1},W_2, S^{n}, \si \right)} \nonumber \\
&+ \sum_{t=\na+\nt+1}^{n}{(1-\db) H\left( X[t] | Y_1^{t-1}, Y_2^{t-1},W_2, S^{n}, \si \right)} \nonumber \\
&= \sum_{t=1}^{\na}{\frac{(1-\da)}{(1-\da^2)} H\left( Y_1[t], Y_2[t] | Y_1^{t-1}, Y_2^{t-1},W_2, S^{n}, \si \right)} \nonumber \\
&+ \sum_{t=\na+1}^{\na+\nt}{\frac{(1-\dt)}{(1-\dt^2)} H\left( Y_1[t], Y_2[t] | Y_1^{t-1}, Y_2^{t-1},W_2, S^{n}, \si \right)} \nonumber \\
&+ \sum_{t=\na+\nt+1}^{n}{\frac{(1-\db)}{(1-\db^2)} H\left( Y_1[t], Y_2[t] | Y_1^{t-1}, Y_2^{t-1},W_2, S^{n}, \si \right)} \nonumber \\
&\overset{(b)}\geq \min\{ 1/\ba, 1/\bb \} H\left( Y_1^n, Y_2^n | W_2, S^{n}, \si \right) + \smallO(n) \nonumber \\
&\overset{(c)}\geq 1/\bm H\left( Y_1^n | W_2, S^{n}, \si \right) + \smallO(n),
\end{align}
}

\noindent where $(a)$ holds since $X[t]$ is independent of $S[t]$ (with the conditioned terms); $(b)$ holds since the omitted term has up to two parts, first part is the product of a discrete entropy term with 
\begin{align}
    \max\{ 1/\ba, 1/\bb \} - \min\{ 1/\beta_A, 1/\beta_B \},
\end{align}
which are both non-negative, and the second part is the product of a summation of discrete entropy terms from $(\na+1)$ to $(\na+\nt)$ with
\begin{align}
    \left( 1/\bt - \max\{ 1/\ba, 1/\bb \} \right)^+ 
\end{align}
which is $\smallO(n)$ due to the length of the interval where $(\cdot)^+ = \max(\cdot,0)$; and $(c)$ follows from the non-negativity of the discrete entropy function and \eqref{Eq:BetaMax}. Then, \eqref{Eq:LemmaProof1} immediately implies \eqref{eq:lemma}.
\end{proof}

\noindent \underline{$\mathbf{\mathcal{C}_{2}}$:} The bounds on the individual rates are already presented. Below, we derive the following outer-bound and the other bound is immediately derived by interchanging user IDs:
\begin{align}
\label{Eq:OuterBoundKappa}
R_1 + \bmin R_2 \leq \bmin \left( 1 - \bar{\delta} \right) + \kappa,
\end{align}
for 
\begin{align}
\kappa = \mathbbm{1}_{\max\{\da,\db\} = \da} \eta (1-\da^2) + \mathbbm{1}_{\max\{\da,\db\} = \db} (1 - \eta)  (1 - \db^2). 
\end{align}

The proof for this bound does overlap to some extent with what has already been presented for \eqref{Eq:ConverseBound}. Thus, we try to avoid repetition in deriving some of the inequalities. Suppose rate-tuple $\lp R_1, R_2 \rp$ is achievable. We have:
\begin{align}
&n \left( R_1 + \bmin R_2 \right) = H(W_1) + \bmin H(W_2) \nonumber \\
& \overset{\eqref{Eq:OuterBoundDerivation}}\leq H(Y_1^n|W_2, S^n, \si) + \bmin H(Y_2^n|S^n,\si) - \bmin H(Y_2^n|W_2,S^n,\si) + n \upxi_n \nonumber \\
& \overset{(a)}\leq \bmin H(Y_2^n|S^n,\si) + \kappa_n + \smallO(n) + n \upxi_n \nonumber \\
& \leq n \bmin \left( 1 - \bar{\delta} \right) + \kappa_n + \smallO(n) + n\upxi_n,
\end{align}
where any omitted derivation is very similar if not identical to \eqref{Eq:OuterBoundDerivation} and $(a)$ follows from Lemma~\ref{Lemma:Leakage_BIC_kappa} below, which also shows that $\lim_{n \rightarrow \infty}\kappa_n/n = \kappa$. Thus, diving both sides by $n$, and taking the limit $n \rightarrow \infty$, we obtain \eqref{Eq:OuterBoundKappa}.

\begin{lemma}
\label{Lemma:Leakage_BIC_kappa}
For the two-user \name~with delayed channel state feedback as described in Section~\ref{Section:Problem_BIC} and $\bm$ is given in \eqref{Eq:BetaMax}, we have:
\begin{align}
\label{eq:lemma-price}
H\left( Y_1^n | W_2, S^n, \si \right) - \bmin  H\left( Y_2^n | W_2, S^n, \si \right) \leq \kappa_n + \smallO(n),
\end{align}
for any encoding function satisfying~\eqref{eq_enc_function}, where
\begin{align}
\label{Eq:Kappan}
\kappa_n \leq \mathbbm{1}_{\max\{\da,\db\} = \da} \na/\ba (1-\da^2) + \mathbbm{1}_{\max\{\da,\db\} = \db} (n - \na - \nt)/\bb  (1 - \db^2).
\end{align}
\end{lemma}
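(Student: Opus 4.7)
The plan is to reuse the telescoping entropy decomposition from the proof of Lemma~\ref{Lemma:Leakage_BIC_No}, but to relax the final averaging step so that the common coefficient becomes $1/\bmin$ rather than $1/\bm$; the price of this substitution is precisely the correction $\kappa_n$.

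First I would reproduce the chain of inequalities in the proof of Lemma~\ref{Lemma:Leakage_BIC_No} verbatim up through the intermediate bound
\[
H(Y_2^n \mid W_2, S^n, \si) \;\geq\; \sum_{t=1}^{n} \frac{1}{1+\delta^{(t)}} \, H\!\left(Y_1[t], Y_2[t] \,\big|\, Y_1^{t-1}, Y_2^{t-1}, W_2, S^n, \si\right),
\]
where $\delta^{(t)} \in \{\da, \dt, \db\}$ denotes the erasure probability of the mode containing time index $t$. Next I rewrite $\tfrac{1}{1+\delta^{(t)}} = \tfrac{1}{\bmin} - \bigl(\tfrac{1}{\bmin} - \tfrac{1}{1+\delta^{(t)}}\bigr)$ and split the resulting sum over the three modes. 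In the non-transient mode whose $\beta$ already equals $\bmin$ the correction vanishes; in the transient mode, replacing the correction by its positive part yields an $\smallO(n)$ contribution because $\nt = \smallO(n)$; the only substantive residual comes from the non-transient ``$\bm$-mode.''

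Taking WLOG $\da \geq \db$, so that $\bm = \ba$, $\bmin = \bb$, and the $\bm$-mode is mode~$\A$, the $\tfrac{1}{\bmin}$ terms reassemble into $\tfrac{1}{\bmin} H(Y_1^n, Y_2^n \mid W_2, S^n, \si)$, which is lower bounded by $\tfrac{1}{\bmin} H(Y_1^n \mid W_2, S^n, \si)$ via the chain rule and the non-negativity of entropy. For the residual I invoke the per-time bound $H(Y_1[t], Y_2[t] \mid \cdots) \leq 1 - \da^2$ throughout mode~$\A$, which gives a residual of at most $\bigl(\tfrac{1}{\bb} - \tfrac{1}{\ba}\bigr) \na (1 - \da^2)$. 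Multiplying through by $\bmin = \bb$ and using the trivial bound $\ba - \bb = \da - \db \leq 1$ converts this into $\tfrac{\ba - \bb}{\ba} \na (1-\da^2) \leq \tfrac{\na}{\ba}(1-\da^2)$, which is precisely the first indicator contribution in \eqref{Eq:Kappan}. The symmetric case $\db > \da$ produces the second indicator term by swapping the roles of modes~$\A$ and~$\B$.

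I do not anticipate any substantive conceptual obstacle, since no new entropy inequality is needed beyond those already deployed in Lemma~\ref{Lemma:Leakage_BIC_No}; the novelty lies entirely in the accounting for the discrepancy between $1/\ba$ and $1/\bb$ in the ``wrong'' non-transient mode. The one subtlety I will need to be careful about is the transient mode: if $\bt > \bmin$ the associated correction is genuinely positive but is $\smallO(n)$ by the length of the interval, whereas if $\bt \leq \bmin$ the correction is non-positive and can simply be dropped. Either way the transient contribution stays inside the $\smallO(n)$ slack and leaves $\kappa_n$ in the form claimed by \eqref{Eq:Kappan}.
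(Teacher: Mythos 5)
Your proposal is correct and follows essentially the same route as the paper's proof: telescope $H(Y_2^n\mid W_2,S^n,\si)$ per time step, replace each modal coefficient $1/(1+\delta^{(t)})$ by $\max\{1/\ba,1/\bb\}=1/\bmin$, absorb the transient-mode discrepancy into the $\smallO(n)$ term, and bound the residual from the $\bm$-mode via the per-symbol bound $H(Y_1[t],Y_2[t]\mid\cdots)\leq 1-\delta^2$ together with $|\da-\db|\leq 1$, which is exactly how \eqref{Eq:Kappan} arises. If anything, your explicit splitting $\tfrac{1}{1+\delta^{(t)}}=\tfrac{1}{\bmin}-\bigl(\tfrac{1}{\bmin}-\tfrac{1}{1+\delta^{(t)}}\bigr)$ makes the sign bookkeeping of the correction term cleaner than the paper's own presentation.
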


\begin{proof}
We omit any step that is very similar or identical to the proof of Lemma~\ref{Lemma:Leakage_BIC_No}. We have:
{\small
\begin{align}
\label{Eq:LemmaProofKappa}
&H\left( Y_2^n | W_2, S^n, \si \right) \nonumber \\
&\overset{\eqref{Eq:LemmaProof1}}\geq \sum_{t=1}^{\na}{\frac{(1-\da)}{(1-\da^2)} H\left( Y_1[t], Y_2[t] | Y_1^{t-1}, Y_2^{t-1},W_2, S^{n}, \si \right)} \nonumber \\
&+ \sum_{t=\na+1}^{\na+\nt}{\frac{(1-\dt)}{(1-\dt^2)} H\left( Y_1[t], Y_2[t] | Y_1^{t-1}, Y_2^{t-1},W_2, S^{n}, \si \right)} \nonumber \\
&+ \sum_{t=\na+\nt+1}^{n}{\frac{(1-\db)}{(1-\db^2)} H\left( Y_1[t], Y_2[t] | Y_1^{t-1}, Y_2^{t-1},W_2, S^{n}, \si \right)} \nonumber \\
&\overset{(a)}\geq \max\{ 1/\ba, 1/\bb \} H\left( Y_1^n, Y_2^n | W_2, S^{n}, \si \right) + \kappa_n +  \smallO(n) \nonumber \\
&\overset{(b)}\geq 1/\bmin H\left( Y_1^n | W_2, S^{n}, \si \right) + \kappa_n + \smallO(n),
\end{align}
}

\noindent where for $(a)$, the $\smallO(n)$ is similar to what we had in \eqref{Eq:LemmaProof1}, which captures the merger of the non-transient mode between $\na$ and $\na+\nt$, but $\kappa_n$ needs more careful attention. Suppose $1/\ba \geq 1/\bb$, \emph{i.e.}, $\da \leq \db$, then $\kappa_n$ would be:
\begin{align}
(n - \na - \nt) \sum_{t=\na+\nt+1}^{n}{\frac{1}{\bb} H\left( Y_1[t], Y_2[t] | Y_1^{t-1}, Y_2^{t-1},W_2, S^{n}, \si \right)},
\end{align}
but if $1/\ba < 1/\bb$, \emph{i.e.}, $\da > \db$, then $\kappa_n$ would be:
\begin{align}
\na \sum_{t=1}^{\na}{\frac{1}{\ba} H\left( Y_1[t], Y_2[t] | Y_1^{t-1}, Y_2^{t-1},W_2, S^{n}, \si \right)},
\end{align}
which immediately implies \eqref{Eq:Kappan}, and it is a straightforward task to verify $\lim_{n \rightarrow \infty}\kappa_n/n = \kappa$, and $(b)$ follows from the non-negativity of the discrete entropy function and \eqref{Eq:BetaMax}.
\end{proof}

\noindent \underline{\bf Overall outer-bound region:} So far, we have shown that each one of the three regions individually serves as an outer-bound on the capacity region of our problem. Thus, we have:
\begin{equation}
\mathcal{C} \subseteq \mathcal{C}_{1} \cap \mathcal{C}_{2} \cap \mathcal{C}_{3},
\end{equation}
which completes the proof of Theorem~\ref{THM:Bimodal-Outer}.

\section{Achievability of the outer-bounds in Theorem~\ref{THM:Bimodal-Outer}}
\label{Section:Achievability-bimodal}

In this section, we provide the proof for Theorem~\ref{THM:Bimodal-Inner}. But first, we revisit our preliminary work as it paves the way for our proof. In~\cite{Bimodal-ISIT}, we presented a point in the parameters space where $0< \da, \db < 1$ and $0 < \eta < 1$ such that the outer-bound region of Theorem~\ref{THM:Bimodal-Outer} was achievable. However, in this work, we extend the set of parameters for which the outer-bounds are achievable, present more details about the other regime, and discuss potential alternative strategies.

The key finding in this part are two-fold. First, we demonstrate the benefit of inter-modal coding in \BEC s with varying statistics and feedback. In essence, since the transition point, $\na$, is known a priori, the transmitter plans its strategy accordingly to benefit from the multicast aspect of wireless networks as much as possible.
Second, is the fact that the transmitter relies only on the knowledge related to non-transient modes.
For example, the transmitter does not need to know $\nt$ or $\dt$ to achieve the results in this section.  
We then discuss when the inner and outer bounds diverge, and investigate potential future steps.

\subsection{Example presented in~\cite{Bimodal-ISIT}}

\noindent \underline{\bf Setup:} Consider a \name~with channel state feedback having the following parameters: $\da = 0.75$, $\db = 0$, $n_A = \lfloor 32/35n \rfloor$.
In this particular example, the first mode has a high erasure probability, while in the second mode, both links are always on and no erasure occurs. 
Further, the lengths of the modes are chosen carefully as it becomes clear shortly. 

\noindent \underline{\bf Benchmarks:} The first benchmark is the one that ignores the feedback and using erasure codes achieves a sum-rate of $0.25$ in the first mode and $1$ in the second mode for an approximate average sum-rate of $0.31$.
Perhaps the more relevant benchmark is to treat the two modes separately, \emph{i.e.}, intra-modal coding. 
In other words, we can treat the first mode as a \BEC~with delayed channel state feedback having an optimal sum-rate of $7/22 \approx 0.318$ using the well-known three-phase network coding strategy~\cite{georgiadis2009broadcast}, and the second mode as a BEC having an optimal sum-rate of $1$. 
We note that the capacity region of the uni-modal problem can be obtained from our results when $\eta \in \{0 , 1\}$. 
The communication strategy in each mode is well-known and is performed in three separate phases~\cite{georgiadis2009broadcast}: the first two are dedicated to the transmission of uncoded (raw) packets, while the third phase combines the packets delivered to the unintended receiver but not the intended one. 
In doing so, the transmitter creates packets of common interest, which benefit both receivers.
This intra-modal coding strategy results in a weighted average sum-rate of approximately $0.38$.

\noindent \underline{\bf Inter-modal coding:} Instead of the separate treatment of the two modes, for \emph{inter-modal} coding, we start with 
$m = n/5$    
packets\footnote{Without loss of generality, we assume $m \in \mathbb{Z}^+$. Any impact on analysis would vanish as $n \rightarrow \infty$.} for each receiver. 
We then create two phases during the first mode. 
The first phase is dedicated to transmitting the packets intended for user $1$ until \emph{at least} one receiver obtains that packet. 
The transmitter meanwhile keeps track of the status of each transmitted packet: delivered to the intended user, delivered to the unintended user and not the intended user, needs repeating. 
The second phase is similar but dedicated to the packets of user $2$. 
We note that these phases are similar to the first two phases of the typical network coding strategy for \BEC s with feedback.
Each phase will take an average time of\footnote{Here, we limit our analysis to the average values of different random variables. A careful analysis of the communication strategy will entail concentration inequalities similar to~\cite{lin2021capacity}.} 
\begin{align}
    \frac{1}{(1-\da^2)}m = \frac{16}{35}n.
\end{align}
At the end of the second phase (which coincides with the end of the first mode), there will be on average 
\begin{align}
    \frac{\da (1-\da)}{(1-\da^2)}m = \frac{3}{35}n
\end{align}
packets intended for user $i$ available at user $\bar{i}$ but not user $i$ for $\bar{i} = 2 - i$ and $i = 1,2$. Such packets are tracked at the transmitter in a queue represented by $v_{i|\bar{i}}$.

During phase 3, which coincides with the second mode where no erasure occurs, the transmitter at each time sends the summation (XOR in the binary domain) of the packets at the head of queues $v_{1|2}$ and $v_{2|1}$. 
The resulting packets are referred to as multi-cast packets as they assist \emph{both} receivers.
As the links in the second mode are always on, these multi-cast packets are delivered each time and the packets move through the queues one at a time. 
The length of the second mode in this example is carefully chosen to match the (average) number of multi-cast packets that would result from the first mode.
It is straightforward to verify that the receivers can resolve the interference and recover their intended packets at the end of the multi-cast phase. 
Since each user will recover $n/5$ bits successfully over a block-length of $n$, the achievable sum-rate would be $0.4$, which is larger than the weighted average when we treated the two modes separately. 

For the parameters given here, the outer-bounds in Theorem~\ref{THM:Bimodal-Outer} become:
\begin{equation}
\label{Eq:Bimodal-Capacity}
\mathcal{C}^\mathrm{out} \equiv 
\left\{ \begin{array}{ll}
0 \leq \frac{7}{4} R_1 + R_2 \leq \frac{7}{4} \left( 1 - \frac{24}{35} \right), & \\
0 \leq R_1 + \frac{7}{4} R_2 \leq \frac{7}{4} \left( 1 - \frac{24}{35} \right). & 
\end{array} \right.
\end{equation}
These outer-bounds result in a maximum symmetric sum-rate of $0.4$ and as the corner points are easily achievable, the capacity region in this example is known and matches the region described in Theorem~\ref{THM:Bimodal-Outer}.
This example demonstrates the existence of a non-trivial example in which the outer-bound region of Theorem~\ref{THM:Bimodal-Outer} is achievable, and thus, proves Theorem~\ref{THM:Bimodal-Inner}.

\subsection{Proof of Theorem~\ref{THM:Bimodal-Inner}: Improving upon~\cite{Bimodal-ISIT}} 

Here, we show that the outer-bounds are achievable for conditions stated in Theorem~\ref{THM:Bimodal-Inner}, which are presented again below for convenience:
\begin{align}
\label{Eq:ConditionsRepeat}
\da \geq \db, \qquad \text{~and~} \qquad \eta \geq \left( 1 + \frac{\da(1-\da)}{2(1-\db)} \right)^{-1}.
\end{align}
Under these conditions, it is a straightforward task to verify that:
\begin{align}
\mathcal{C}_{1} \cap \mathcal{C}_{2} \cap \mathcal{C}_{3} \equiv \mathcal{C}_{1},
\end{align}
where $\mathcal{C}_{1}$ is given in \eqref{Eq:Region-Outer-da}. Thus, to prove the outer-bounds are achievable, it suffices to prove the achievability of the symmetric sum-rate point obtained from $\mathcal{C}_{1}$ given by:
\begin{align}
\label{Eq:taregtPoint}
R_1 = R_2 = \frac{\bm \left( 1 - \bar{\delta} \right)}{1+\bm} = \frac{(1+\da) ( 1 - \bar{\delta} )}{\left( 2+\da \right)}
\end{align}

A quick check reveals that the example presented above obviously satisfies the first condition and the suggested $\eta$ of $32/35$ satisfies the second condition with equality. Essentially, the right-hand side of the second condition in \eqref{Eq:Conditions} is the point at which the entire mode with lower erasure probability (\emph{i.e.}, the second mode since we assume $\da \geq \db$) becomes fully consumed with the multi-cast packets from the first mode. 

The communication strategy is as follows. Suppose we start with $m$ packets (bits) for each user. Sending the raw packets until at least one user obtains the packet takes on average:
\begin{align}
\frac{m}{(1-\da^2)}.
\end{align}
Thus, the average total time dedicated to the communication of raw packets is given by 
\begin{align}
\label{Eq:DefineAlpha}
\frac{2m}{(1-\da^2)} \overset{\triangle}= n\alpha,
\end{align}
and we restrict $\alpha \leq \eta$.

\begin{remark}[Using statistical average values]
Focusing on the average lengths of different operations enables us to focus on the key aspects of the strategy and provide a clean, intuitive solution. 
In reality, each average term would be accompanied some perturbation. 
However, as it has been done in various recent results~\cite{AlirezaBFICDelayed,IFB-Journal}, a careful use of concentration-type results would show those perturbation terms do not affect the final results. 
For instance, instead of using \eqref{Eq:DefineAlpha} for the first two phases, one would use: 
\begin{align}
n \alpha + \mathcal{O}(n^{2/3}),
\end{align}
to ensure completion of the two phases with probability approaching $1$.
Then, it is a straightforward task to show $\mathcal{O}(n^{2/3})$ terms do not affect the final results as $n \rightarrow \infty$.
\end{remark}

After the transmission of raw packets and as discussed earlier, we denote by virtual queue $v_{i|\bar{i}}$ the packets intended for user $i$ available at user $\bar{i}$ but not user $i$ for $\bar{i} = 2 - i$ and $i = 1,2$. In the following phase, the transmitter will deliver the packets in $v_{1|2} \oplus v_{2|1}$, which denotes the pair-wise addition of the packets (bits) in each virtual queue. The average value of $\left| v_{1|2} \oplus v_{2|1} \right|$ is given by:
\begin{align}
\frac{m\da(1-\da)}{(1-\da^2)}.
\end{align}

The multi-cast phase that follows the transmission of raw packets will consume the remainder $n(\eta - \alpha)$ of the first non-transient mode, the transient mode\footnote{The transmitter may treat the transient mode as the second non-transient mode and the impact would vanish as $n \rightarrow \infty$.}, and the entire length of the second non-transient mode. 
Note that once all multi-cast packets are delivered to both users, all desired packets can be recovered at each user.
Thus, we define the effective delivery rate of the multi-cast packets as:
\begin{align}
R_{\mathsf{eff}} &= \lim_{n \rightarrow \infty}\frac{n(1-\eta-\nt/n)(1-\db) + \nt (1-\dt) + n(\eta-\alpha)(1-\da)}{n(1-\alpha)} \nonumber \\
& = \frac{(1-\eta)(1-\db) + (\eta-\alpha)(1-\da)}{(1-\alpha)}.
\end{align}
Thus, the total communication time normalized by the blocklength can be written as:
\begin{align}
    \alpha + \frac{\left| v_{1|2} \oplus v_{2|1} \right|}{nR_{\mathsf{eff}}} \overset{\eqref{Eq:DefineAlpha}}= \alpha + \frac{\alpha \da (1-\da)}{2 R_{\mathsf{eff}}},
\end{align}
which we wish to be equal to $1$. In other words, we will show that as long as the following quadratic equation in $\alpha$ has a non-negative solution less than or equal to $\eta$, we can achieve the desired rates:
\begin{align}
    \alpha + \frac{\alpha \da (1-\da)}{2 R_{\mathsf{eff}}} = 1.
\end{align}
It is then a straightforward practice in basic calculus to show such a solution exists as long as:
\begin{align}
    \eta \geq \left( 1 + \frac{\da(1-\da)}{2(1-\db)} \right)^{-1},
\end{align}
since one solution is: 
\begin{align}
\label{Eq:SolutionAlpha}
    \alpha=\frac{2(1-\bar{\delta})}{(2+\da)(1-\da)},
\end{align} 
and the other is the impossible value of $\alpha = 1$. Thus, putting the value of $\alpha$ from \eqref{Eq:SolutionAlpha} into \eqref{Eq:DefineAlpha}, we get the following number of packets that can be successfully delivered to each user over a blocklength of $n$:
\begin{align}
    m = \frac{(1-\da^2)\alpha n}{2} \overset{\eqref{Eq:SolutionAlpha}}=\frac{(1+\da)(1-\bar{\delta})n}{(2+\da)},
\end{align}
which immediately implies the achievability of the rates in \eqref{Eq:taregtPoint}. This completes the achievability proof of the maximum sum-rate point given in \eqref{Eq:taregtPoint} and thus, Theorem~\ref{THM:Bimodal-Inner}.

\subsection{Challenges} 

\begin{figure}[!ht]
\centering
\includegraphics[width = .5\columnwidth]{./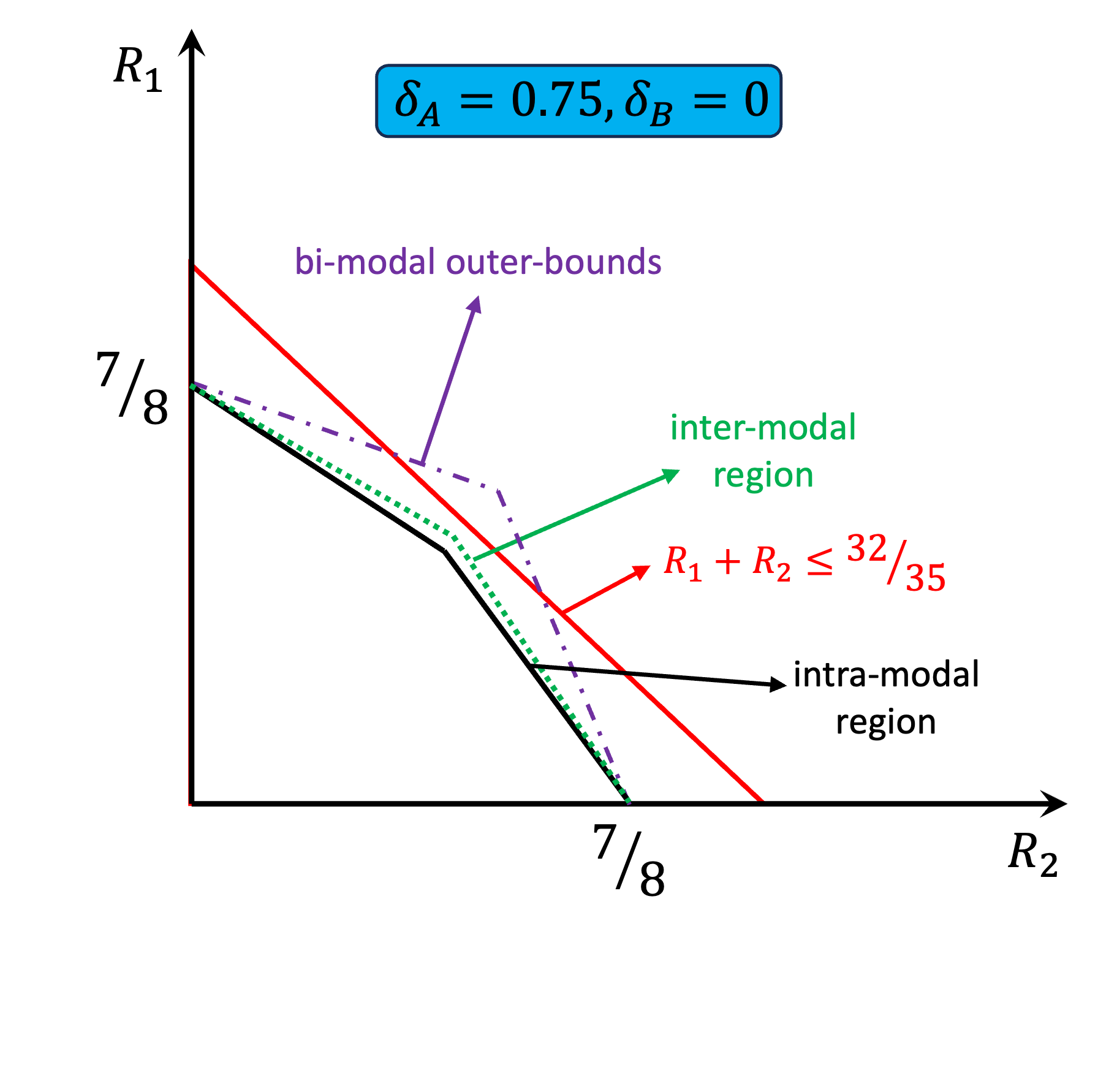}
\caption{An example with $\delta_A = 0.75$, $\delta_B = 0$, and $n_A = \lfloor n/6 \rfloor$ for which the inner and outer bounds deviate. Inter-modal coding nonetheless outperforms intra-modal coding.\label{Fig:BiModal-Challenge}}
\end{figure}

The following example is borrowed from our preliminary work~\cite{Bimodal-ISIT}: keep $\delta_A = 0.75$ and $\delta_B = 0$, but set $n_A = \lfloor n/6 \rfloor$ resulting in $\bar{\delta} = 1/8$. 
For these parameters, the instantaneous feedback outer-bound becomes active and maximum sum-rate is bounded by: 
\begin{align}
    R_1 + R_2 \leq \frac{32}{35} \approx 0.91.
\end{align}
Following a similar approach to the previous example, meaning that using the first mode to send new packets intended for each user in two phases and then using part of the second mode for delivering multi-cast packets, leaves most of the second mode untouched, which is then amended by delivering fresh packets at a sum-rate of $1$. 
This strategy results in a sum-rate of approximately $0.89$, which falls short of the outer-bound as shown in Figure~\ref{Fig:BiModal-Challenge} (this figure is borrowed from~\cite{Bimodal-ISIT}).
Nonetheless, this achievable rate is above the weighted average sum-rate of separate treatment of the two non-transient modes, which gives us an achievable rate of $0.875$.

The proposed achievability strategy is based on the intuition that it is beneficial to transmit the multicast packets during the mode with lower erasure probability. 
However, it is not immediately obvious whether this strategy is necessarily optimal.
One may instead dedicate some fraction of each non-transient mode to the transmission of uncoded packets and the remaining to the multicast packets, and then try and optimize over the choice of these fractions such that the overall achievable rate is maximized.
We numerically evaluated such a strategy and verified that the intuition indeed holds and there is no additional gain with this new idea.

\subsection{Further discussion}

\noindent \underline{\bf Relative channel strength in non-transient modes:}
In the examples we have discussed so far, we assumed the erasure probability during the first non-transient mode is larger than that of the second non-transient mode, \emph{i.e.} $\da > \db$. 
This assumption makes it easier to start with uncoded packets and push the multicast phase of communication to the second non-transient mode.
If instead the first non-transient mode has stronger channels (lower erasure probability), we could potentially achieve the same rates taking advantage of the scheme presented in~\cite{yang2012degrees}, known as reverse Maddah-Ali-Tse scheme.
This scheme starts by sending combinations of uncoded packets and effectively implements similar ideas in reverse.

\noindent \underline{\bf Beyond linear coding:}
The proposed achievability strategy falls under the category of linear coding.
In the context of multi-user erasure channels, it was shown in~\cite{AlirezaNoCSIT} that a non-linear scheme could outperform linear schemes although in the context of erasure interference channels. 
It is then interesting to see as a future direction if any non-linear code could improve the achievable region presented in this work.


\section{Conclusion}
\label{Section:Conclusion_BiModal}

To model MA/RIS-aided wireless networks in higher frequency bands, we introduced the non-ergodic \name~with transient and non-transient modes. 
We provided a new set of outer-bounds as well as an inter-modal coding strategy, which match under non-trivial conditions. 
We also discussed the challenges in other regimes and discussed potential future directions for improving the inner and the outer bounds.
Other future follow ups may include the extension of the results to a more generalized model for feedback signaling, such as intermittent and rate-limited, as well as networks with distributed transmitters such as the erasure interference channel.




\bibliographystyle{ieeetr}
\bibliography{bib_FBBudget}

\end{document}